%% file: main.tex
\documentclass[aps,pra,twocolumn,showpacs,superscriptaddress,nofootinbib]{revtex4-2}

\usepackage{graphicx}  % needed for figures
\usepackage{tikz}
\usetikzlibrary{arrows.meta, decorations.markings,shapes.geometric}

\usepackage{float}
\usepackage[toc,page]{appendix}
\usepackage[table]{xcolor}
\usepackage{mathtools}
\usepackage{dcolumn}   % needed for some tables
\usepackage{bm}        % for math
\usepackage{color}

\usepackage{hyperref}
\usepackage{pifont}
\usepackage{mathrsfs}
\usepackage{amsmath, amsthm, amssymb}
\usepackage{amsfonts}
\usepackage{relsize}    % to resize the sum symbol etc
\usepackage{bbold}       % for identity in tensor notation
\usepackage{accents}    % \underaccent { accent } { symbol }

\hypersetup{
	colorlinks=true, %set true if you want colored links
	linktoc=all,     %set to all if you want both sections and subsections linked
	linkcolor=blue,  %choose some color if you want links to stand out
	citecolor=blue,
	filecolor=blue,
	urlcolor=blue
}
\usepackage{soul}
\newcommand{\beq}{\begin{equation}}
	\newcommand{\eeq}{\end{equation}}
\newcommand{\bqa}{\begin{eqnarray}}
	\newcommand{\eqa}{\end{eqnarray}}

\newcommand{\erf}[1]{Eq.~(\ref{#1})}

\newcommand{\erfa}[2]{Eqs.~(\ref{#1}) and (\ref{#2})}
 
\newcommand{\srf}[1]{Sec.~\ref{#1}}

\definecolor{BLACK}{gray}{0}
\definecolor{RED}{rgb}{1,0,0}
\definecolor{GREEN}{rgb}{0.2,.6,0.2}
\definecolor{amber}{rgb}{1.0,0.49,0.0}

\newcommand{\bra}[1]{\langle{#1}|}

\newcommand{\ket}[1]{|{#1}\rangle}

\newcommand{\op}[2]{\ket{#1}\bra{#2}}

\newtheorem{theorem}{Theorem}

\begin{document}
	
	\title{Robustness hierarchy of bipartite quantum correlations under noisy dynamics}
	
	\input author_list.tex

	\date{\today}
	
	\begin{abstract}
		Quantum correlations are fragile under noise, motivating quantitative measures that track not only
		their presence but also their degradation during physical evolution. We develop a robustness
		framework for the hierarchy of Bell nonlocality, EPR steering, and entanglement, establishing an
		ordering of the corresponding robustness measures and relating robustness to noise-induced
		transition times. We further formulate the instantaneous robustness decay along Lindblad dynamics
		as a directional derivative determined by both the evolution and the robustness functional.
		Applications to isotropic depolarizing noise and amplitude damping illustrate, respectively, the
		resulting resource transition times and the directional dependence of steering robustness. The
		framework provides an approach to characterizing the degradation and lifetime of bipartite quantum
		correlations under noisy dynamics.
	\end{abstract}

	\maketitle
	
	\section{Introduction}
	
	Quantum correlations are among the most distinctive features of quantum theory and constitute key
	resources for quantum information processing
	tasks~\cite{HorHor09,AciWil18,PirWal20,BhaAsp22,MonBay25}. In practical implementations, however,
	these correlations are vulnerable to environmental noise and decoherence, which progressively
	reduce their nonclassical character and constrain the performance of quantum protocols~\cite{Zur03,
		TinEbe09,DarSlu24}. Understanding how different forms of quantum correlations change under physical
	noise is therefore important both for fundamental studies of nonclassicality and for assessing the
	performance of realistic quantum technologies~\cite{StrHer11,IkeLow24}.
	
	Quantum correlations form a hierarchy of nonclassicality~\cite{WisDoh07,QuiBru15}, with Bell
	nonlocality~\cite{Bell64} implying Einstein-Podolsky-Rosen (EPR) steering~\cite{UolGuh20}, which in
	turn implies entanglement. This hierarchy has been demonstrated experimentally through increasingly
	stringent tests of nonclassical correlations, including loophole-free Bell
	experiments~\cite{HenHan15,GiuZei15,ShaNam15}. Related hierarchies have also been investigated for
	quantum correlations defined in temporal scenarios, where distinct forms of temporal quantum
	correlations can likewise be ordered~\cite{KuNor18}.
	
	Here, we consider the standard bipartite Bell--steering--entanglement hierarchy and focus on how
	these correlations degrade along physical trajectories in quantum state space. In particular, a
	noisy quantum state may lose Bell nonlocality while remaining steerable, subsequently lose steering
	while remaining entangled, and eventually become separable. A quantitative description that can be
	applied consistently across these different levels is therefore useful for studying the degradation
	of quantum correlations under noise.
	
	Robustness measures provide one such description by quantifying the amount of mixing with classical
	states required to remove a specified quantum resource. The robustness of entanglement was
	introduced in~\cite{VidTar99} and subsequently generalized in several directions~\cite{Steiner03},
	including formulations for steering~\cite{SkrCav14,PiaWat15}, Bell nonlocality and measurement
	incompatibility~\cite{CavSkr16}, as well as quantum
	channels~\cite{HarNie03,HorRus03,TakReg19,GouSca21}. These results establish robustness as a broadly
	applicable resource measure. Nevertheless, the behavior of robustness along continuous quantum
	dynamical trajectories, and its connection to the timescale over which a quantum resource is lost,
	have received comparatively less attention across the Bell--steering--entanglement hierarchy.
	
	In this work, we develop a robustness framework for bipartite quantum correlations based on the
	nested local-hidden-variable (LHV), local-hidden-state (LHS), and separable sets. We show that the
	corresponding robustness measures inherit an ordering from the inclusion relations among these sets.
	This provides a common quantitative framework for comparing the strength and degradation of Bell
	nonlocality, steering, and entanglement, complementing the criteria used to identify these
	resources.
	
	We study how these robustness measures evolve under quantum channels and, in particular, under
	continuous Markovian dynamics. For noise processes that can be expressed as mixing with a classical
	reference state, the robustness determines the noise threshold at which the evolving state enters
	the corresponding classical set. We use this relation to connect robustness with transition times
	and to define a time-dependent robustness profile. At the level of a Lindblad master equation, we
	further express the instantaneous change of robustness through the directional derivative of the
	robustness along the dynamical trajectory. This separates the effect of the physical evolution from
	the response of the chosen robustness measure. The resulting framework connects static robustness
	quantification with the continuous-in-time evolution of quantum resources under noise. It provides a
	way to characterize not only whether a resource survives a given channel, but also how its
	robustness changes during the evolution and when the corresponding classical regime is reached.
	
	Applications to isotropic depolarizing noise and to local amplitude damping dynamics recover the
	known hierarchy of transition thresholds and show how asymmetric dynamics can produce distinct
	steering robustness profiles and lifetimes for the two-party assignments.
	
	The paper is organized as follows. In \srf{sec:framework}, we introduce the robustness framework
	and establish the ordering of robustness measures associated with the hierarchy of quantum
	correlations. Section~\ref{sec:channel} develops the dynamical formulation of robustness and its
	connection to quantum channels and open-system evolution. Section~\ref{sec:iso:noise} applies the
	framework to isotropic depolarizing noise using two-qubit Werner states, while
	\srf{sec:directional} considers directional robustness under asymmetric dynamics. We conclude in
	\srf{sec:conclusion} with a summary of the main results.
	
	\section{Geometric Robustness Framework} \label{sec:framework}
	
	Quantum correlations form a hierarchy according to the assumptions imposed on the measurement
	devices~\cite{WisDoh07,QuiBru15}. These assumptions define distinct classical models, each of which
	excludes a corresponding class of quantum correlations that cannot be described within that model.
	In this work, we associate each classical model with a convex subset of the quantum state space and
	quantify the distance from these sets using entanglement robustness measures.
	
	\subsection{Hierarchy of Quantum Correlations} \label{sec:definitions}
	
	Consider a bipartite state $\rho_{AB}$ shared between two parties. The statistics generated by
	local measurements are described by
	\begin{equation}
		p(a,b|A,B) = \mathrm{Tr} \left[\left(\hat M_{a|A}\otimes \hat M_{b|B}\right)\rho_{AB} \right],
	\end{equation}
	where $A$ and $B$ label the measurement settings, $a$ and $b$ denote the
	corresponding outcomes, and $\hat M_{a|A}$ and $\hat M_{b|B}$ are positive local measurement
	operators satisfying $\sum_a \hat M_{a|A}=\hat I= \sum_b \hat M_{b|B}$, where $\hat I$ is the
	identity operator.
	
	Different assumptions concerning the characterization of the measurement
	devices lead to different classical descriptions of the observed
	correlations. In the fully device-independent setting, both measurement
	devices are treated as uncharacterized. The corresponding local hidden
	variable (LHV) model is defined by~\cite{UolGuh14}
	\begin{equation} 	\label{eqn:corr:lhv}
		p(a,b|A,B)=\sum_{\lambda}p(\lambda)\,p(a|A,\lambda)\,p(b|B,\lambda),
	\end{equation}
	where $\lambda$ is a shared classical random (hidden) variable distributed according to
	$p(\lambda)$. The notation $p(a|A,\lambda)$ and $p(b|B,\lambda)$ represents conditional probability
	distributions specifying the local response of each party to the measurement choice, given the
	shared variable $\lambda$. The set of states whose measurement statistics admit such a description
	is denoted by
	$\mathfrak C_{\mathrm{LHV}}$.
	
	If one subsystem is trusted while the other remains uncharacterized, the
	corresponding classical model is a local hidden state (LHS) model. For
	example, when Bob's subsystem is trusted, the correlations admit an LHS
	decomposition of the form
	\begin{equation} 	\label{eqn:corr:lhs}
		p(a,b|A,B) = \sum_{\lambda} p(\lambda)\,p(a|A,\lambda)\,\mathrm{Tr} \left[ \hat
		M_{b|B}\rho_B^\lambda \right],
	\end{equation}
	where $\rho_B^\lambda$ denotes a quantum state assigned to Bob's subsystem.
	The corresponding set is denoted by $\mathfrak C_{\mathrm{LHS}}$.
	Interchanging the trusted and untrusted parties gives, in general, a
	different LHS model and hence a directional notion of steering.
	
	Finally, when both subsystems are fully characterized, the relevant
	classical set is the set of separable states, $
	\rho_{AB}=\sum_{\lambda}p(\lambda)\,\rho_A^\lambda\otimes\rho_B^\lambda$,
	denoted by $\mathfrak C_{\mathrm{SEP}}$. The corresponding measurement
	statistics can equivalently be written as
	\begin{equation}
		p(a,b|A,B)=\sum_{\lambda} p(\lambda)\,\mathrm{Tr} \left[\hat M_{a|A}\rho_A^\lambda \right]
		\mathrm{Tr}\left[\hat M_{b|B}\rho_B^\lambda \right].
	\end{equation}
	These classical sets satisfy the inclusion hierarchy
	\begin{equation} 	\label{eqn:inclusion:set}
		\mathfrak C_{\mathrm{SEP}} \subsetneq \mathfrak C_{\mathrm{LHS}} \subsetneq \mathfrak
		C_{\mathrm{LHV}}.
	\end{equation}
	Thus, the different classes correspond to nested regions of the quantum
	state space, with entangled states lying outside $\mathfrak C_{\mathrm{SEP}}$,
	steerable states lying outside the appropriate LHS set, and Bell-nonlocal
	states lying outside $\mathfrak C_{\mathrm{LHV}}$.
	
	\subsection{Hierarchy of Robustness} \label{sec:hierarchy:R}
	
	The nested structure in Eq.~\eqref{eqn:inclusion:set} provides a natural
	setting for comparing the robustness associated with different classical
	sets. Let $\mathfrak C$ denote a convex set of states representing a specified classical model. For
	a fixed reference state $\rho_c\in\mathfrak C$, the relative robustness of a state $\rho$ is
	defined
	as~\cite{VidTar99}
	\begin{equation}
		R(\rho\|\rho_c) = \min \left\{\alpha\geq0 \;\middle|\;\rho(\alpha,\rho_c)\in\mathfrak C \right\},
	\end{equation}
	where
	\begin{equation} 	\label{eqn:rho:geo:mix}
		\rho(\alpha,\rho_c) \equiv \frac{\rho+\alpha\rho_c}{1+\alpha}
	\end{equation}
	is the corresponding convex mixture. The parameter $\alpha$ quantifies the minimum amount of the
	reference state required to bring $\rho$ into the classical set. Optimizing also over the choice of
	reference state gives the absolute robustness,
	\begin{equation} 	\label{eqn:abs:R}
		R_{\mathfrak C}(\rho) = \min_{\rho_c\in\mathfrak C} R(\rho\|\rho_c) =
		\min_{\substack{\alpha\geq0\\\rho_c\in\mathfrak C}}
		\left\{\alpha\;\middle|\;\rho(\alpha,\rho_c)\in\mathfrak C \right\}.
	\end{equation}
	The robustness therefore provides a geometric coordinate measuring the
	distance of a state from the convex set $\mathfrak C$ along the family of
	admissible mixing directions. If an optimal solution exists, the
	corresponding condition can also be expressed as the pseudomixture
	decomposition
	\begin{equation} 	\label{eqn:afine:decomp}
		\rho=(1+\alpha^*)\rho_{c_1}-\alpha^*\rho_{c_2},
	\end{equation}
	where $\alpha^*=R_{\mathfrak C}(\rho)$ and $\rho_{c_1},\rho_{c_2}\in\mathfrak C$.
	This representation makes explicit the interpretation of robustness as an affine extension of the
	classical set.
	
	The nesting of the classical sets immediately induces an ordering of the
	corresponding robustness measures.
	
	\begin{theorem}[Robustness ordering under nested classical sets] 	\label{theorem:hierarchy:R}
		Let $\mathfrak C_2\subsetneq\mathfrak C_1$ be two nested convex sets. Then
		\begin{equation}
			R_{\mathfrak C_2}(\rho) \geq R_{\mathfrak C_1}(\rho).
		\end{equation}
	\end{theorem}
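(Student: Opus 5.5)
The argument is a feasibility-set inclusion: every pair $(\alpha,\rho_c)$ admissible in the minimization \eqref{eqn:abs:R} for $\mathfrak C_2$ is also admissible for $\mathfrak C_1$. A minimum over a larger set cannot exceed a minimum over a subset. Concretely, write
\begin{equation*}
F_{\mathfrak C}(\rho)=\left\{(\alpha,\rho_c)\;\middle|\;\alpha\ge 0,\ \rho_c\in\mathfrak C,\ \rho(\alpha,\rho_c)\in\mathfrak C\right\},
\end{equation*}
so that $R_{\mathfrak C}(\rho)=\inf\{\alpha \mid (\alpha,\rho_c)\in F_{\mathfrak C}(\rho)\}$. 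I will use the infimum, with the convention $\inf\emptyset=+\infty$, so that the statement also covers cases where the minimum is not attained or does not exist.

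The key steps are as follows.

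\emph{Step 1.} Take $(\alpha,\rho_c)\in F_{\mathfrak C_2}(\rho)$. Then $\rho_c\in\mathfrak C_2\subseteq\mathfrak C_1$, and the mixture $\rho(\alpha,\rho_c)\in\mathfrak C_2\subseteq\mathfrak C_1$. Hence $(\alpha,\rho_c)\in F_{\mathfrak C_1}(\rho)$, which gives $F_{\mathfrak C_2}(\rho)\subseteq F_{\mathfrak C_1}(\rho)$.

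\emph{Step 2.} Taking infima over the $\alpha$-coordinates of nested sets gives $R_{\mathfrak C_1}(\rho)\le R_{\mathfrak C_2}(\rho)$. If $F_{\mathfrak C_2}(\rho)$ is empty, the inequality holds trivially because the right-hand side is $+\infty$.

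\emph{Step 3.} If attained minimizers are wanted, as in the pseudomixture decomposition \eqref{eqn:afine:decomp}, I will note the following. An optimal pair $(\alpha^*,\rho_{c_2})$ for $\mathfrak C_2$ together with $\rho_{c_1}=\rho(\alpha^*,\rho_{c_2})$ gives a decomposition $\rho=(1+\alpha^*)\rho_{c_1}-\alpha^*\rho_{c_2}$ with both states in $\mathfrak C_1$. This witnesses $R_{\mathfrak C_1}(\rho)\le\alpha^*$ directly.

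The only delicate point is definitional rather than technical: making sure the infimum convention handles emptiness and non-attainment. Convexity and strictness of the inclusion are not needed for the inequality itself. Strictness cannot be expected to give a strict inequality in general; for example, a state inside $\mathfrak C_2$ has both robustnesses equal to $0$.

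Finally, I will remark that applying the theorem to \eqref{eqn:inclusion:set} yields $R_{\mathrm{SEP}}\ge R_{\mathrm{LHS}}\ge R_{\mathrm{LHV}}$. Here the LHS robustness is taken for a fixed steering direction.
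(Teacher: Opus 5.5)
Your proposal is correct and takes the same route as the paper's proof. Both show that every feasible pair $(\alpha,\rho_c)$ for $\mathfrak C_2$ stays feasible for $\mathfrak C_1$, so the minimum over the larger feasible set cannot exceed the minimum over the smaller one; your infimum convention and your remark that neither convexity nor strict inclusion is needed are sound refinements of that argument.
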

	
	\begin{proof}
		Consider any feasible pair $(\alpha,\rho_c)$ in the optimization defining
		$R_{\mathfrak C_2}(\rho)$. Thus, $\rho_c\in\mathfrak C_2$ and
		$\rho(\alpha,\rho_c)\in\mathfrak C_2$. Since $\mathfrak C_2\subsetneq\mathfrak C_1$, the same pair
		is also feasible for the optimization defining
		$R_{\mathfrak C_1}(\rho)$. Hence the feasible set of the latter optimization
		contains that of the former, and its minimum cannot be larger. This completes the proof.
	\end{proof}
	Applying Theorem~\ref{theorem:hierarchy:R} directly to the set relation in \erf{eqn:inclusion:set}
	establishes
	\begin{equation} \label{eqn:hierarchy:R} 
		R_{\mathrm{LHV}}(\rho) \leq R_{\mathrm{LHS}}(\rho) \leq R_{\mathrm{SEP}}(\rho).
	\end{equation}
	Thus, the hierarchy of classical sets is reflected directly in the
	corresponding robustness measures.
	
	Equation~\eqref{eqn:hierarchy:R} is a statement about the hierarchy of robustness and does not by
	itself determine the dynamics of robustness.
	For a given quantum channel, the state follows a particular trajectory
	through the state space, and the rate at which the trajectory approaches a
	classical set depends on both the geometry of that set and the direction of
	the dynamical flow. For symmetric trajectories these effects can be
	separated particularly simply, as illustrated by the Werner-state example
	in \srf{sec:iso:noise}. For more general anisotropic dynamics, the trajectory can probe different
	robustness landscapes at different rates.
	
	Therefore, it is also useful to relate the robustness parameter to a physical noise
	fraction. Consider a mixing process of the form
	\begin{equation}
		\rho(\wp,\rho_c) = (1-\wp)\rho+\wp\rho_c.
	\end{equation}
	Comparing this expression with Eq.~\eqref{eqn:rho:geo:mix} gives $	\wp={\alpha}/({1+\alpha})$.
	Consequently, if the trajectory reaches the classical boundary at a robustness value
	$R(\rho\|\rho_c)$, the corresponding noise fraction is
	\begin{equation} 	\label{eqn:thresh:noise}
		\wp_{c}=\frac{R(\rho\|\rho_c)}{1+R(\rho\|\rho_c)}.
	\end{equation}
	This relation provides a direct connection between the  robustness
	and a physically realized mixing parameter. Once the channel determines
	$\wp$ as a function of time, distance, or another evolution parameter,
	Eq.~\eqref{eqn:thresh:noise} identifies the point at which the trajectory
	reaches the corresponding classical set.

	\section{Robustness under Quantum Dynamics} \label{sec:channel}
	
	The robustness measures introduced in the previous section quantify the
	nonclassical character of an individual quantum state relative to a
	specified classical set. In a physical setting, however, quantum
	states are continuously modified by interactions with their environment.
	It is therefore natural to consider robustness as a dynamical quantity and
	to study how it changes under the action of a physical quantum channel.
	
	Let $\mathcal E$ denote a completely positive trace-preserving (CPTP) map
	acting on bipartite states, $\rho \longmapsto \mathcal E(\rho)$.
	For a given convex classical set $\mathfrak C$, the map $\mathcal E$ is $\mathfrak C$-preserving if
	\begin{equation}
		\mathcal E(\mathfrak C)\subseteq\mathfrak C.
	\end{equation}
	Such a channel cannot generate the corresponding nonclassical resource
	from a state that already belongs to $\mathfrak C$. Many physically relevant operations and noise
	processes satisfy this property for the appropriate choice of $\mathfrak C$. The robustness is
	non-increasing under every $\mathfrak C$-preserving channel.
	
	\begin{theorem}[Monotonicity under $\mathfrak C$-preserving channels]  
		\label{theorem:channel:monotonicity}
		Let $\mathfrak C$ be a convex classical set and let $\mathcal E$ be a
		CPTP map satisfying $\mathcal E(\mathfrak C)\subseteq\mathfrak C$. Then
		\begin{equation}
			R_{\mathfrak C}\!\left(\mathcal E(\rho)\right) \le R_{\mathfrak C}(\rho)
		\end{equation}
		for every bipartite quantum state $\rho$.
	\end{theorem}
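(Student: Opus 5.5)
The plan is to push an optimal (or near-optimal) feasible pair for $\rho$ through the channel and check that it remains feasible for $\mathcal E(\rho)$. This mirrors the feasibility-set comparison used in the proof of Theorem~\ref{theorem:hierarchy:R}.

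Let $(\alpha,\rho_c)$ be any feasible pair in the optimization \eqref{eqn:abs:R} defining $R_{\mathfrak C}(\rho)$. Then $\alpha\ge 0$, $\rho_c\in\mathfrak C$, and $\rho(\alpha,\rho_c)=(\rho+\alpha\rho_c)/(1+\alpha)\in\mathfrak C$. Equivalently, in the form \eqref{eqn:afine:decomp}, there is a state $\sigma\in\mathfrak C$ with $\rho+\alpha\rho_c=(1+\alpha)\sigma$.

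Apply $\mathcal E$ to this identity. Linearity of the CPTP map gives
\begin{equation}
\frac{\mathcal E(\rho)+\alpha\,\mathcal E(\rho_c)}{1+\alpha}=\mathcal E(\sigma).
\end{equation}
Since $\mathcal E(\mathfrak C)\subseteq\mathfrak C$, both $\mathcal E(\rho_c)\in\mathfrak C$ and $\mathcal E(\sigma)\in\mathfrak C$. Trace preservation and positivity ensure these are genuine states. Hence $(\alpha,\mathcal E(\rho_c))$ is a feasible pair for $R_{\mathfrak C}(\mathcal E(\rho))$. This yields $R_{\mathfrak C}(\mathcal E(\rho))\le\alpha$.

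Taking the infimum over all feasible $(\alpha,\rho_c)$ for $\rho$ gives $R_{\mathfrak C}(\mathcal E(\rho))\le R_{\mathfrak C}(\rho)$. If the minimum is attained we can simply use $\alpha=\alpha^*$. Otherwise the infimum argument suffices, so no compactness assumption on $\mathfrak C$ is needed. If no feasible pair exists, then $R_{\mathfrak C}(\rho)=\infty$ and the inequality is trivial.

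The only delicate point, and it is mild, is that the mixing must be affine. The key step is that $\mathcal E$ commutes with convex combinations, so that the image of the mixture is the mixture of the images. This is what lets the $\mathfrak C$-preserving hypothesis be applied separately to the reference state and to the mixed state. Convexity of $\mathfrak C$ is not actually needed beyond the definition of $R_{\mathfrak C}$ itself.
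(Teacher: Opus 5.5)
Your proposal is correct and follows essentially the paper's argument: the paper likewise applies $\mathcal E$ by linearity to an optimal decomposition $\rho=(1+\alpha^*)\rho_{c_1}-\alpha^*\rho_{c_2}$ and uses $\mathfrak C$-preservation to keep both components in $\mathfrak C$. Your version works with an arbitrary feasible pair and then takes the infimum, which slightly improves on the paper's proof because it does not assume the optimum is attained.
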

	
	\begin{proof}
		Let \erf{eqn:afine:decomp}
		denote an optimal robustness decomposition. Applying the linear map
		$\mathcal E$ gives
		\begin{equation}
			\mathcal E(\rho)=(1+\alpha^*)\mathcal E(\rho_{c_1})-\alpha^*\mathcal E(\rho_{c_2}).
		\end{equation}
		Since $\mathcal E$ is $\mathfrak C$-preserving, $\mathcal E(\rho_{c_1}),\mathcal
		E(\rho_{c_2})\in\mathfrak C$. Hence $\alpha^*$ remains a feasible value in the robustness
		optimization for the output state. It follows that
		\begin{equation}
			R_{\mathfrak C}\!\left(\mathcal E(\rho)\right) \leq \alpha^* = R_{\mathfrak C}(\rho).
		\end{equation}
	\end{proof}
	
	Theorem~\ref{theorem:channel:monotonicity} establishes robustness as a resource monotone under
	$\mathfrak C$-preserving quantum operations. In particular, the same
	result applies to the classical sets associated with Bell nonlocality,
	EPR steering, and entanglement. It therefore provides a common framework
	for studying the evolution of these resources under physical noise,
	without requiring the dynamics to be characterized through a particular
	Bell inequality, steering criterion, or entanglement witness.
	
	\subsection{Robustness profiles and transition times}  \label{sec:robustness:profile}
	
	Consider a continuous family of quantum channels
	$\{\mathcal E_t\}_{t\geq0}$ generating the state trajectory $\rho(t)=\mathcal E_t(\rho_0)$. The
	robustness evaluated along this trajectory defines the
	\emph{robustness profile}
	\begin{equation} 	\label{eqn:robustness:profile}
		R_{\mathfrak C}(t) \coloneq R_{\mathfrak C}\!\left(\rho(t)\right).
	\end{equation}
	The profile describes the evolution of the nonclassical resource as the
	state is transformed by the physical dynamics. Its functional form
	depends both on the robustness functional associated with $\mathfrak C$
	and on the particular state-space trajectory generated by the channel.
	
	Whenever $R_{\mathfrak C}$ is differentiable along the trajectory, we
	define the instantaneous robustness decay rate by
	\begin{equation} 	\label{eqn:robustness:decay}
		\Gamma_{\mathfrak C}(t) \coloneq -\dot{R}_{\mathfrak C}(t).
	\end{equation}
	For a $\mathfrak C$-preserving evolution, the monotonicity theorem implies
	that the robustness is non-increasing, and therefore
	$\Gamma_{\mathfrak C}(t)\geq0$\footnote{For non-Markovian dynamics, temporary revivals associated
		with information backflow maybe exhibited, leading to intervals where the decay rate vanishes or
		even becomes negative.}.
	
	A corresponding transition time is defined as the first time at which the state reaches the
	classical set,
	\begin{equation} 	\label{eqn:transition:time}
		t_{\mathfrak C} =\inf \left\{t\geq0 \,\middle|\, R_{\mathfrak C}(t)=0 \right\}.
	\end{equation}
	Thus, the robustness profile contains information about both the instantaneous evolution of the
	resource and the point at which the corresponding classical boundary is reached.
	
	\subsection{Markovian open-system dynamics}
	\label{sec:OQS}
	
	For Markovian open-system dynamics, the state evolution is governed by the
	Gorini--Kossakowski--Sudarshan--Lindblad master equation~\cite{Lindblad76,GKS76},
	\begin{equation} 	\label{eqn:GKSL}
		\dot{\rho}(t)=\mathcal L(\rho(t)),
	\end{equation}
	where the Lindblad generator has the standard form
	\begin{equation} 	\label{eq:lindblad:generator}
		\mathcal L(\rho) = -i[\hat H,\rho] +\sum_k \left(\hat L_k\rho \hat L_k^\dagger-\frac{1}{2}\{\hat
		L_k^\dagger \hat L_k,\rho\}\right).
	\end{equation}
	Here $\hat H$ is the system Hamiltonian and $\hat L_k$ are the Lindblad jump
	operators describing the dissipative processes. The corresponding
	quantum dynamical semigroup is formally written as $\rho(t)=e^{t\mathcal L}\rho(0)$. The Lindblad
	generator therefore specifies the instantaneous direction
	of motion of the density operator in state space. This provides the
	natural dynamical description for evaluating how a robustness functional
	changes under a specified physical noise process.
	
	\subsubsection{Lindblad dynamics for state mixing}
	\label{app:affine:noise}
	
	The state-mixing noise model introduced in Sec.~\ref{sec:hierarchy:R}
	can be realized by a Markovian process. This provides an explicit Lindblad realization of the
	state-space interpolation between a resource state and
	a reference state. For a product reference state $	\rho_c=\rho_c^A\otimes\rho_c^B$,
	with spectral decompositions
	\begin{equation}
		\rho_c^A=\sum_k p_k\ket{a_k}\bra{a_k}, \qquad \rho_c^B=\sum_m q_m\ket{b_m}\bra{b_m},
	\end{equation}
	where $\{\ket{a_k}\}$ and $\{\ket{b_m}\}$ form orthonormal bases for the local Hilbert spaces of
	Alice and Bob, respectively, and $\sum_k p_k = \sum_m q_m = 1$.
	one may choose local jump operators of the form
	\begin{align} 
		\hat L_{jk}^{A} &= \sqrt{\gamma\, p_k}\,\ket{a_k}\bra{a_j}\otimes \hat I_B,
		\label{affine:noise:La}\\
		\hat L_{jm}^{B} &= \sqrt{\gamma\, q_m}\,\hat I_A\otimes\ket{b_m}\bra{b_j}, \label{affine:noise:Lb}
	\end{align}
	with $\gamma$ denoting a characteristic rate. These operators drive the local subsystems towards
	the
	target states $\rho_c^A$ and $\rho_c^B$, respectively.
	
	When the initial correlated state has the same local marginals as the target product state, the
	resulting master equation reduces to
	\begin{equation} \label{eqn:ME:affine:noise}
		\dot{\rho}(t) = -2\gamma\left[\rho(t)-\rho_c\right].
	\end{equation}
	Its solution is
	\begin{equation} 	\label{eqn:affine:evolution}
		\rho(t) = [1-\lambda(t)]\rho_0+\lambda(t)\rho_c,\qquad \lambda(t)=1-e^{-2\gamma t}.
	\end{equation}
	Thus, the state-mixing model used in the robustness construction can be
	generated by a standard Markovian Lindblad process. The detailed evaluation of the corresponding
	jump terms is given in Appendix~\ref{app:affine:derivation}.

	\subsubsection{Interpretation of robustness decay}
	\label{sec:interpretation:decay}
	
	The robustness profile can be related directly to the generator of the
	underlying open-system dynamics. Suppose that the robustness functional
	is differentiable\footnote{At points where the robustness functional is not differentiable, the
		ordinary Fr\'echet derivative in Eq.~\eqref{eq:robustness:chainrule} is not defined and the
		robustness profile should instead be treated using one-sided or generalized derivatives. The
		results
		below therefore assume differentiability wherever an explicit robustness decay rate is reported.}
	at
	$\rho(t)$. Using \erf{eqn:GKSL} and applying the chain rule gives
	\begin{equation} 	\label{eq:robustness:chainrule}
		\dot{R}_{\mathfrak C}(\rho(t)) = DR_{\mathfrak C}(\rho(t)) \left[\mathcal L(\rho(t)) \right],
	\end{equation}
	where $DR_{\mathfrak C}(\rho)$ denotes the functional derivative of the
	robustness evaluated at $\rho$ (see Appendix~(\ref{appn:frechet:der}) for more details).
	Equation~\eqref{eq:robustness:chainrule} expresses the instantaneous
	change of robustness as a directional derivative along the physical
	state-space trajectory generated by the Lindblad equation. The Lindblad generator determines the
	direction of the state evolution, while the functional derivative determines the response of the
	robustness to that particular perturbation.
	
	This relation does not require an explicit analytical expression for the
	full robustness functional. In general, such an expression may be
	difficult to obtain because robustness is defined through a convex
	optimization problem. The directional derivative is therefore most
	useful when the state trajectory possesses additional structure that
	allows the robustness to be evaluated or reduced to a lower-dimensional
	family. 
	
	For finite-dimensional quantum systems, the derivative can be
	represented using the Hilbert--Schmidt inner product $\langle A,B\rangle_{\rm
		HS}=\mathrm{Tr}(A^\dagger B)$. The Riesz representation theorem guarantees that every bounded
	linear
	functional admits a unique representation as an inner product with an operator~\cite{Ste08}.
	Accordingly, there exists a unique operator $\nabla R_{\mathfrak C}(\rho)$ such that
	\begin{equation} 	\label{eq:frechet:gradient:HS}
		DR_{\mathfrak C}(\rho)[X] = \left\langle \nabla R_{\mathfrak C}(\rho),X\right\rangle_{\rm HS}.
	\end{equation}
	The robustness rate can therefore be written as
	\begin{equation} 	\label{eq:robustness:gradient:flow}
		\dot{R}_{\mathfrak C} =\left\langle \nabla R_{\mathfrak C}(\rho),\mathcal L(\rho)
		\right\rangle_{\rm HS}.
	\end{equation}
	This expression provides an operator-space interpretation of the
	directional derivative. The robustness gradient describes the local
	sensitivity of the robustness functional to changes in the state, whereas
	$\mathcal L(\rho)$ specifies the instantaneous state change generated by
	the physical dynamics. The robustness decay is consequently determined
	by their Hilbert--Schmidt overlap rather than by the magnitude of the
	Lindblad evolution alone.
	
	This interpretation allows one to derive general bounds on the instantaneous variation of the
	robustness. Applying the Cauchy--Schwarz inequality gives the general bound
	\begin{equation} 	\label{eq:robustness:CS:bound}
		\left|\dot{R}_{\mathfrak C}\right|\leq\left\|\nabla R_{\mathfrak C}(\rho)\right\|_{\rm
			HS}\left\|\mathcal L(\rho)\right\|_{\rm HS}.
	\end{equation}
	The first factor characterizes the local sensitivity of the robustness
	functional, while the second quantifies the instantaneous magnitude of
	the physical state-space evolution. Equation~\eqref{eq:robustness:CS:bound}
	therefore separates the response of the robustness functional from the
	strength of the underlying dynamical evolution.

	\section{Isotropic noise and robustness decay}  \label{sec:iso:noise}

Having established the robustness hierarchy and its dynamical formulation, we now consider a
canonical example in which the relevant transition boundaries are known analytically. Two-qubit
Werner states~\cite{Werner89} provide a natural benchmark because they form a one-parameter family
that is preserved under isotropic depolarizing dynamics and has well-characterized thresholds for
entanglement, steering, and Bell nonlocality. The purpose of this example is therefore twofold:
first, to recover the known transition points within the robustness framework; and second, to
demonstrate how the directional derivative of robustness can be evaluated explicitly when the
Lindblad trajectory is confined to a one-dimensional state manifold. We consider the Werner family
	\begin{equation} 	\label{eqn:2q:werner}
		\rho_{\rm W}(\wp)=(1-\wp)\rho_- + \wp\,\varrho_{\rm c},
	\end{equation}
	where $\rho_- = |\psi_-\rangle\langle\psi_-|$ is the singlet state, $\varrho_{\rm c}=\hat I/4$ is
the maximally mixed state, and $\wp\in[0,1]$ denotes the isotropic noise fraction. The full range
for which $\rho_{\rm W}(\wp)$ is a valid density operator extends beyond this interval, but the
above restriction for$\wp$ is the physically relevant range for the present interpretation as
mixing with white noise.

For Werner states, the critical noise thresholds associated with Bell nonlocality, steering, and
entanglement are known exactly (or, in the steering case, depend on the measurement
scenario)~\cite{Werner89,Horodecki95, Horodecki96,BruWeh14}. In particular,
$\wp_{\mathrm{LHV}}=1-{1}/{\sqrt2}$ and $\wp_{\mathrm{SEP}}=2/3$ while the steering threshold
satisfies
	\begin{equation} 	\label{eqn:2q:werner:p:order}
		\wp_{\mathrm{LHV}} < \wp_{\mathrm{LHS}} < \wp_{\mathrm{SEP}}.
	\end{equation}
	Within the robustness formulation, these thresholds correspond to the
	values of the Werner parameter at which the state enters the corresponding
	classical sets $\mathfrak C_{\mathrm{LHV}}$, $\mathfrak C_{\mathrm{LHS}}$, and
	$\mathfrak C_{\mathrm{SEP}}$. The symmetry reduction discussed in
	Appendix~\ref{app:werner:symmetry}
	further shows that the absolute robustness optimization for this family is realized by mixing with
	the maximally mixed state. Thus, the robustness along the Werner family can be related directly to
	the corresponding threshold parameter.
	
	\begin{figure*}[t]
		\centering
		\includegraphics[width=\linewidth]{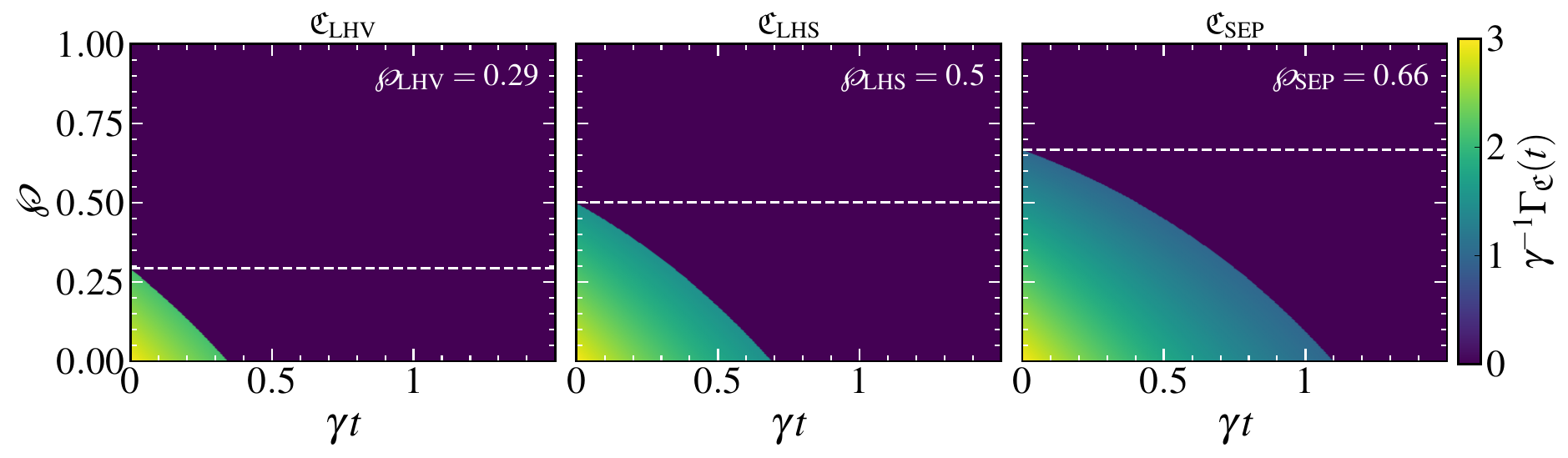}
		\caption{\label{fig:decay}
			(Color online). Dimensionless robustness decay rate $\gamma^{-1}\Gamma_{\mathfrak C}(t)$ for
Werner states evolving under exponential depolarizing noise, shown as a function of the initial
parameter $\wp$ and dimensionless time $\gamma t$. The three panels correspond to the classical
sets $\mathfrak C_{\mathrm{LHV}}$ (left), $\mathfrak C_{\mathrm{LHS}}$ (center), and $\mathfrak
C_{\mathrm{SEP}}$ (right). The colored regions indicate points along the evolution for which the
corresponding robustness remains positive, while the dark regions indicate that the robustness
has already reached zero. The curved boundary separating the two regions is the transition-time
curve defined by Eq.~\eqref{eqn:w:trantime}. The horizontal dashed line marks the critical Werner
parameter $\wp_{\mathfrak C}$, above which the corresponding robustness is already zero at the
initial time. The color scale gives the instantaneous robustness decay rate from
Eq.~\eqref{eqn:werner:decay:exp}. Although the depolarizing dynamics are identical in all three
panels, the decay rate is rescaled by the factor $(1-\wp_{\mathfrak C})^{-1}$, reflecting the
different robustness thresholds associated with the three classical sets.
		}
	\end{figure*}

	\subsection{Depolarizing Lindblad dynamics}
	
	We now consider the Markovian depolarizing evolution similar to \erf{eqn:affine:evolution} where $
	\mathcal E_t(\rho) = [1-\Lambda(t)]\rho + \Lambda(t)\varrho_{\rm c}$.  This map can be generated by
	considering the master equation (\ref{eqn:GKSL}), and choosing the following jump operators
	\begin{align}
		\hat L_j^{A} &= \sqrt{\gamma_{\rm d}}\,\hat\sigma_j\otimes\hat I_B,
		\\
		\hat L_j^{B} &= \sqrt{\gamma_{\rm d}}\,\hat I_A\otimes\hat\sigma_j,
	\end{align}
	with $j=1,2,3$. Here $\gamma_{\rm d}$ corresponds to the depolarization rate and $\hat{\sigma}_j$
	denotes the Pauli operators. With the maximally mixed state as the stationary state, the resulting
	Lindblad dynamics generates the isotropic contraction of the two-qubit state towards $\varrho_{\rm
		c}$. Writing
	$\gamma \equiv 8\gamma_{\rm d}$ for the effective depolarization rate, the resulting
	noise parameter is given by $\Lambda(t)=1-e^{-\gamma t}$.
	
	Because the depolarizing dynamics preserve the Werner family, an initial
	state $\rho_0=\rho_{\rm W}(\wp)$ remains on the same one-dimensional
	manifold, $	\rho(t)=\rho_{\rm W}(\wp_{\rm eff}(t))$, with (consult Appendix~\ref{app:werner:depol}
	for the details)
	\begin{equation} 	\label{eqn:werner:effective:parameter}
		\wp_{\rm eff}(t) = \Lambda(t)+[1-\Lambda(t)]\wp = 1-(1-\wp)e^{-\gamma t}.
	\end{equation}
	The state trajectory is therefore completely characterized by the single
	dynamical coordinate $\wp_{\rm eff}(t)$.
	
	The transition associated with a classical set $\mathfrak C$ occurs when
	the trajectory reaches its corresponding Werner threshold, $\wp_{\rm eff}(t_{\mathfrak
		C})=\wp_{\mathfrak C}$. For the exponential dynamics, this gives
	\begin{equation} 	\label{eqn:w:trantime}
		t_{\mathfrak C} = \frac{1}{\gamma}\ln\left(\frac{1-\wp}{1-\wp_{\mathfrak C}}\right),
	\end{equation}
	provided $\wp\leq\wp_{\mathfrak C}$. If $\wp>\wp_{\mathfrak C}$, the initial state is already
	inside
	the corresponding classical set and $t_{\mathfrak C}=0$.
	
	Equation~\eqref{eqn:w:trantime} illustrates the separation between the
	state-space criterion and its dynamical parameterization. The value
	$\wp_{\mathfrak C}$ specifies where the classical boundary intersects the
	Werner manifold, whereas the Lindblad dynamics determine how rapidly the
	trajectory reaches that point. For a fixed depolarizing rate, states with
	smaller initial $\wp$ therefore remain outside the corresponding classical
	set for longer. In particular,
	\begin{equation} 	\label{eqn:w:t:trans:ub}
		0 \leq t_{\mathfrak C} \leq \frac{1}{\gamma} \ln \left(\frac{1}{1-\wp_{\mathfrak C}} \right),
	\end{equation}
	where the upper bound is attained by the singlet state. Therefore, the ordering of threshold
	parameters (\ref{eqn:2q:werner:p:order}) consequently gives
	\begin{equation}
		t_{\mathrm{LHV}} \leq t_{\mathrm{LHS}} \leq t_{\mathrm{SEP}},
	\end{equation}
	for the same initial Werner state and depolarizing dynamics.
	
	\subsubsection{Robustness decay along the Lindblad trajectory}
	
	The transition time characterizes when the robustness reaches zero, but the same trajectory also
allows its instantaneous variation to be evaluated, \erf{eq:robustness:chainrule}. For the present
example, the full functional derivative of the robustness over the entire operator space is not
required. Since the Lindblad dynamics preserve the Werner family, $\mathcal L(\rho(t))$ is tangent
to the one-dimensional Werner trajectory. It is therefore sufficient to evaluate the directional
derivative of the robustness along this trajectory. A straightforward calculation shows that
(Appendix~\ref{app:werner:decay}):
	\begin{equation} 	\label{eqn:werner:robustness:profile}
		R_{\mathfrak C}\left(\rho_{\mathrm W}(\wp_{\mathrm{eff}})\right)=\frac{\wp_{\mathfrak
				C}-\wp_{\mathrm{eff}}}{1-\wp_{\mathfrak C}}.
	\end{equation}
	Thus, along the Lindblad trajectory, the robustness decay can be expressed as
	\begin{equation} 	\label{eqn:werner:decay:general}
		\Gamma_{\mathfrak C}(t) = -DR_{\mathfrak C}(\rho(t)) \left[\mathcal L(\rho(t))\right]=
		-\frac{dR_{\mathfrak C}} {d\wp_{\rm eff}} \dot{\wp}_{\rm eff},
	\end{equation}
	and using \erfa{eqn:werner:effective:parameter}{eqn:werner:robustness:profile} it simplifies to
	\begin{equation} 	\label{eqn:werner:decay:exp}
		\Gamma_{\mathfrak C}(t)=\frac{\gamma(1-\wp)e^{-\gamma t}}{1-\wp_{\mathfrak C}}.
	\end{equation}
	This expression is valid while
	$\wp_{\rm eff}(t)<\wp_{\mathfrak C}$. Once the trajectory enters the
	classical set, the robustness is zero and its decay rate is taken to vanish.
	
	Equation~\eqref{eqn:werner:decay:exp} provides a direct illustration of the
	directional-derivative formulation. The factor $\dot{\wp}_{\rm eff}$ is determined entirely by the
	Lindblad dynamics, whereas the factor $(1-\wp_{\mathfrak C})^{-1}$ arises from the restriction of
	the robustness functional to the Werner manifold. Thus, even for identical physical dynamics,
	different classical boundaries produce different instantaneous robustness decay profiles.
	
	Figure~\ref{fig:decay} summarizes these results. The boundary between the
	colored and dark regions corresponds to the transition time
	$t_{\mathfrak C}$, while the color scale represents the instantaneous
	robustness decay rate. The horizontal dashed line marks
	$\wp=\wp_{\mathfrak C}$, above which the initial state already has zero
	robustness with respect to the corresponding classical set. The three
	panels therefore display the same physical trajectory under different robustness
	criteria, illustrating how the choice of classical set affects both the
	transition point and the instantaneous rate of robustness depletion.
	
	\section{Directional Robustness under Asymmetric Dynamics}   \label{sec:directional}
	
	The Werner-state example considered above involves isotropic depolarizing dynamics, for which the
state remains within the Werner family and the robustness can be evaluated using a single effective
noise parameter. More general open-system dynamics need not preserve such a one-parameter family.
In particular, nonunital channels can drive the state along trajectories that depend on the
subsystem on which the noise acts.

This provides a natural setting in which to examine the directional character of robustness. While
Bell nonlocality and entanglement are symmetric properties of a bipartite state, steering is
associated with a choice of trusted subsystem. Consequently, the same physical trajectory can be
evaluated with respect to different LHS robustness functionals. Hence, the resulting robustness
profiles and their decay rates need not coincide.
	
	\subsection{Direction-dependent robustness profile} \label{sec:geom:traj}
	
	A general single-qubit channel can be represented in the Bloch-vector
	picture as
	\begin{equation}
		\mathbf r \longmapsto C\mathbf r+\mathbf d,
	\end{equation}
	where $C$ describes the linear contraction of the Bloch vector and
	$\mathbf d$ accounts for nonunital displacement~\cite{NieChu10}. Unital processes satisfy $\mathbf
	d=0$, whereas nonunital processes drive the local state towards a
	preferred state.
	
	When such a process acts locally on one subsystem of a bipartite state, the
	resulting evolution generally differs from an isotropic (depolarizing)
	evolution. More importantly for steering, the robustness itself depends
	on the assignment of trusted and untrusted subsystems. We therefore denote
	the two directional steering robustness measures by $R_{\mathrm{LHS}}^{A\to B}(\rho)$, and
	$R_{\mathrm{LHS}}^{B\to A}(\rho)$, where the arrow specifies the direction of the steering task,
	and
	in general, $R_{\mathrm{LHS}}^{A\to B}(\rho)\neq R_{\mathrm{LHS}}^{B\to A}(\rho)$.
	
	The asymmetry of the robustness therefore does not require an asymmetric
	channel. Rather, an asymmetric channel provides a physical trajectory
	along which the two directional robustness functionals are sampled
	differently. For a common state evolution $\rho(t)$, one may consequently
	have different robustness for $A \rightleftarrows B$, with corresponding differences in their decay
	rates and transition times.
	
	The directional robustness rate is naturally described by the Lindblad formulation
	introduced in Sec.~\ref{sec:OQS}. For a differentiable robustness
	functional,
	\begin{equation}
		\dot R_{\mathrm{LHS}}^{A\to B}(t) = DR_{\mathrm{LHS}}^{A\to B}(\rho(t)) \left[\mathcal
		L(\rho(t))\right],
	\end{equation}
	and similarly for the opposite direction. Thus, the physical channel
	determines the state-space dynamic through $\mathcal L$, while the
	directional robustness functional determines the response of the resource
	measure to that evolution.
	
	\subsection{Amplitude-Damping Example}		\label{sec:amp:damp}
	
	To illustrate this distinction, we consider a local amplitude-damping
	process acting only on Alice's subsystem. In contrast to the depolarizing
	process considered in Sec.~\ref{sec:iso:noise}, amplitude damping is
	nonunital and drives Alice towards a preferred local state.
	
	For a Markovian amplitude-damping process, the Lindblad generator is
	obtained by choosing the jump operator according to
	\begin{equation} 	\label{eqn:jump:amp:damp}
		\hat L_{\mathrm a} = \sqrt{\gamma_{\rm a}}\, |0\rangle\langle1|\otimes\hat I_B,
	\end{equation}
	with vanishing Hamiltonian contribution. Solving the master equation gives the standard local
	amplitude-damping channel, $\rho(t) = [\mathcal E_A^{(t)}\otimes\mathcal I_B](\rho_0)$, with Kraus
	operators
	\begin{align}
		\hat K_0(t) &=|0\rangle\langle0|+\sqrt{1-\eta(t)}\,|1\rangle\langle1|, \label{eqn:k0:amp:damp}\\
		\hat K_1(t) &= \sqrt{\eta(t)}\,|0\rangle\langle1|,\label{eqn:k1:amp:damp}
	\end{align}
	where
	\begin{equation} 	\label{eqn:eta:amp:damp}
		\eta(t)=1-e^{-\gamma_{\rm a}t}.
	\end{equation}
	Take the initial state to be the maximally entangled singlet $\rho_0= \rho_-$, the resulting state
	is
	\begin{equation} 	\label{eqn:amp:damp:state}
		\rho(\eta) = \frac{1}{2}\left[\eta \op{00}{00}-\sqrt{1-\eta}\,\left(\op{01}{10}+ {\rm
			H.c.}\right)\right],
	\end{equation}
	where H.c. stands for Hermitian conjugate, and the reduced states are
	\begin{align} 	
		\rho_A(t) &= \frac12 \left([1+\eta(t)]|0\rangle
		\langle0|+[1-\eta(t)]|1\rangle\langle1|\right),\label{eqn:amp:damp:marginal:A}\\
		\rho_B(t) &= \frac12\hat I. \label{eqn:amp:damp:marginal:B}
	\end{align}
	Hence Alice's local state is progressively driven towards the ground
	state, whereas Bob's reduced state remains maximally mixed. The resulting
	bipartite trajectory is therefore qualitatively different from the
	isotropic Werner trajectory considered previously.
	
	\subsubsection{Directional robustness decay along the amplitude-damping trajectory}
	\label{sec:amp:damp:decay}
	
	The robustness dynamics can now be obtained using the directional
	derivative introduced in Sec.~\ref{sec:OQS}. Unlike the
	Werner family, the amplitude-damped states do not in general admit a
	robustness expression determined solely by an isotropic mixing parameter.
	Nevertheless, the physical trajectory is explicitly parameterized by
	$\eta(t)$, so that the directional derivative along the trajectory can be
	written as
	\begin{equation} 	\label{eqn:amp:damp:directional}
		\dot R_{\mathfrak C}(t) = DR_{\mathfrak C}(\rho(t)) \left[\mathcal L_{\mathrm a}(\rho(t))\right] =
		\frac{d R_{\mathfrak C}(\rho(\eta))}{d\eta} \,\dot\eta(t),
	\end{equation}
	and the corresponding decay rate is obtained through $\Gamma_{\mathfrak C}(t) =
	-\dot R_{\mathfrak
		C}(t)$. For the Markovian amplitude-damping process, $\dot\eta(t) = \gamma_{\rm
		a}[1-\eta(t)]$.
	Thus, the physical rate of motion along the trajectory is fixed by the Lindblad
	generator, whereas
	the factor
	$dR_{\mathfrak C}/d\eta$ describes the response of the robustness functional to
	that motion.
	
	This expression also makes clear an important distinction from the Werner example. For the Werner
family, symmetry reduces the robustness optimization to a known one-dimensional profile, allowing
the directional derivative to be evaluated analytically. For the amplitude-damping trajectory, the
corresponding derivative generally remains tied to the underlying robustness optimization. The
Lindblad formulation nevertheless identifies precisely the quantity that must be evaluated: the
action of the robustness functional on the physical tangent direction $\mathcal L_{\mathrm
	a}(\rho)$, rather than the full functional gradient.
	
	For steering, the same trajectory can be evaluated for the two possible
	trusted-party assignments. We therefore have
	\begin{equation} 	\label{eqn:amp:damp:directional:rates}
		\Gamma_{\mathrm{LHS}}^{A\rightleftarrows B}(t) = -DR_{\mathrm{LHS}}^{A\rightleftarrows B}(\rho(t))
		\left[\mathcal L_{\mathrm a}(\rho(t))\right].
	\end{equation}
	The channel is the same in both expressions; the difference arises from
	the directionality in robustness. This provides a direct
	generator-level description of how asymmetric dynamics can produce
	different steering robustness profiles.
	
	\subsubsection{Directional transition times}
	\label{sec:amp:damp:transition}
	
	The transition time is determined by the first value of $\eta$ for which
	the evolving state enters the relevant classical set. Denoting the
	corresponding value by $\eta_{\mathfrak C}$, which is defined according to $R_{\mathfrak
		C}\bigl(\rho(\eta_{\mathfrak C}(t))\bigr)=0$ the Markovian
	relation in Eq.~\eqref{eqn:eta:amp:damp} gives
	\begin{equation} 	\label{eqn:amp:damp:transition}
		t_{\mathfrak C}=\frac{1}{\gamma_{\rm a}}\ln\left(\frac{1}{1-\eta_{\mathfrak C}} \right).
	\end{equation}
	For the Bell and separability boundaries considered here, the critical values are
	$\eta_{\mathrm{LHV}}=1/2$ and $\eta_{\mathrm{SEP}}=1$, respectively (see
	Appendix~\ref{app:asym:thresholds}). The steering
	thresholds depend on the trusted-party assignment and measurement scenario.
	In general, the two directional thresholds
	$\eta_{\mathrm{LHS}}^{A\to B}$ and $\eta_{\mathrm{LHS}}^{B\to A}$ need not coincide,
	while both lie between the Bell and separability thresholds for the scenarios
	considered. Thus,
	\begin{equation}
		\eta_{\mathrm{LHV}} \leq \eta_{\mathrm{LHS}}^{A\to B},\eta_{\mathrm{LHS}}^{B\to A} \leq
		\eta_{\mathrm{SEP}},
	\end{equation}
	with their relative ordering depending on the steering criterion and
	measurement scenario. Since the transition time is a monotonically increasing
	function of the critical damping parameter, the corresponding time thresholds
	satisfy
	\begin{equation}
		t_{\mathrm{LHV}} \leq t_{\mathrm{LHS}}^{A\to B},t_{\mathrm{LHS}}^{B\to A} \leq t_{\mathrm{SEP}}.
	\end{equation}
	
	The amplitude-damping example therefore complements the isotropic
	depolarizing example in two ways. First, it demonstrates that the
	Lindblad--robustness relation remains applicable when the state trajectory
	does not coincide with a simple isotropic mixing path. Second, because the
	LHS robustness depends on the trusted-party assignment, the same physical
	trajectory can be associated with distinct robustness decay rates and
	transition times.

\section{Conclusion}		\label{sec:conclusion}
	
Understanding how nonclassical quantum resources degrade under realistic open-system dynamics is
important for characterizing the performance and lifetime of quantum protocols. In this work, we
developed a dynamical framework based on convex robustness measures, treating robustness not only
as a state-dependent quantifier of nonclassicality but also as a quantity that evolves along
physical quantum trajectories. The formalism connects the robustness of a state directly to the
Lindblad generator governing its evolution, providing a description of instantaneous robustness
decay and resource transition times.

We demonstrated the framework for two complementary noise processes. Isotropic depolarizing
dynamics provide an analytically tractable benchmark through the Werner family, while local
amplitude damping illustrates its extension to asymmetric, nonunital trajectories. In the latter
case, the directional nature of steering leads to distinct robustness profiles and transition times
for different trusted subsystems.

An important direction for future work is the efficient evaluation of robustness derivatives for
general optimization-defined resource measures, including at points where the robustness is not
differentiable. A related direction is to investigate how different physical processes contribute
to the change in robustness over time, particularly through their action on the robustness
gradient. Extending the framework to time-dependent and non-Markovian dynamics, correlated noise,
and higher-dimensional systems would further clarify how quantum resources evolve along realistic
open-system trajectories.

	% -------------------------------------------------------------------------
	%				acknowledgement
	% -------------------------------------------------------------------------
	\section*{Acknowledgment}
	This work was supported in part by the Australian Government
	through the Trailblazer Program.
	
	% ---------------------------------------------------
	%			Appendices
	% ---------------------------------------------------
	\appendix
	\numberwithin{equation}{section}
	
	\section{Derivation of ME (\ref{eqn:ME:affine:noise})}  \label{app:affine:derivation}
	
	The total Markovian master equation governed by the jump operators,
	\erfa{affine:noise:La}{affine:noise:Lb}, reads
	\begin{equation} 	\label{appn:ME:affine:noise}
		\dot{\rho}(t) = \sum_{j,k} \mathcal{D}\Big[\hat{L}_{j k}^A\Big]\rho(t) + \sum_{j,m}
		\mathcal{D}\Big[\hat{L}_{j m}^B\Big]\rho(t),
	\end{equation}
	where $\mathcal{D}[\hat{L}]\rho = \hat{L}\rho\hat{L}^\dagger - \frac{1}{2}\{\hat{L}^\dagger\hat{L},
	\rho\}$. Evaluating the jump terms for Alice's subsystem yields
	\begin{equation}
		\sum_{j,k} \hat{L}_{j k}^A \rho(t) \hat{L}_{j k}^{A\dagger} = \gamma \, \rho_c^A \otimes
		\mathrm{Tr}_A[\rho(t)].
	\end{equation}
	To calculate the corresponding anti-commutator loss term, using the completeness relation $\sum_j
	\ket{a_j}\bra{a_j} = \hat{I}_A$ we obtain:
	\begin{equation}
		\sum_{j,k} \hat{L}_{j k}^{A\dagger} \hat{L}_{j k}^A = \gamma \left( \hat{I}_A \otimes \hat{I}_B
		\right) = \gamma \hat{I}.
	\end{equation}
	Therefore, the local loss contribution becomes $-\frac{1}{2}\{\gamma \hat{I}, \rho(t)\} =
	-\gamma\rho(t)$. Replicating this exact derivation for Bob's independent local reset operators, the
	total master equation in Eq.~\eqref{eqn:ME:affine:noise} takes the compact, closed form:
	\begin{equation}
		\dot{\rho}(t) = \gamma ( \rho_c^A \otimes \mathrm{Tr}_A[\rho(t)] + \mathrm{Tr}_B[\rho(t)] \otimes
		\rho_c^B - 2\rho(t) ).
		\label{eqn:app:expanded_general_master}
	\end{equation}
	Now, consider the structural assumption that the initial correlated state $\rho_0$ has marginals
	matching those of the target product state, meaning $\mathrm{Tr}_A[\rho_0] = \rho_c^B$ and
	$\mathrm{Tr}_B[\rho_0] = \rho_c^A$. Taking the partial traces of
	Eq.~\eqref{eqn:app:expanded_general_master} reveals that these local marginals remain invariant
	under the dynamics ($\mathrm{Tr}_A[\rho(t)] = \rho_c^B$ and $\mathrm{Tr}_B[\rho(t)] = \rho_c^A$ for
	all $t \ge 0$). Crucially, this does not imply that the global state is invariant; while the local
	subsystems remain stationary, the global correlations are actively degraded. Under this invariant
	marginal condition, Eq.~\eqref{eqn:app:expanded_general_master} becomes:
	\begin{equation}
		\dot{\rho}(t) = 2\gamma \left( \rho_c^A \otimes \rho_c^B - \rho(t) \right) = -2\gamma \left(
		\rho(t) - \rho_c \right).
	\end{equation}
	Integrating this equation yields \erf{eqn:affine:evolution}.
	
	\section{Derivative of Robustness}  \label{appn:frechet:der}
	
	The robustness functional is defined as an optimization-based quantity over
	the convex classical set $\mathfrak C$. In order to study its continuous-time
	evolution, it is useful to interpret robustness as a functional over the
	space of density operators and introduce its differential structure.
	
	Let $\mathcal{B}(\mathcal H)$ denote the space of bounded operators acting on
	the Hilbert space $\mathcal H$. The Fr\'{e}chet derivative of a functional
	$F:\mathcal{B}(\mathcal H)\rightarrow\mathbb R$ at a point $\rho$ is the
	linear map $DF(\rho)$ satisfying~\cite{GiaHil04}
	\begin{equation} 	\label{eq:frechet:definition}
		F(\rho+\epsilon X) = F(\rho)+\epsilon\,DF(\rho)[X] + o(\epsilon),
	\end{equation}
	for an arbitrary perturbation operator $X\in\mathcal B(\mathcal H)$ and
	$\epsilon\rightarrow0$. Equivalently,
	\begin{equation} 	\label{eq:frechet:limit}
		DF(\rho)[X]=\lim_{\epsilon\rightarrow0}\frac{F(\rho+\epsilon X)-F(\rho)}{\epsilon},
	\end{equation}
	whenever this limit exists. For the robustness functional, this derivative is defined through by
	substituting $F \to R_{\mathfrak C}$.
	The operator $X$ represents an infinitesimal displacement in the quantum state space, and the
	Fr\'{e}chet derivative $DR_{\mathfrak C}(\rho)$ is a linear functional acting on perturbations of
	the state. Accordingly, $DR_{\mathfrak C}(\rho)[X]$ quantifies the corresponding
	first-order change of the robustness. In other words, $DR_{\mathfrak C}(\rho)[X]$ denotes the
	directional derivative of the robustness at the point $\rho$ in direction $X$.
	
	Unlike elementary functions, the robustness measure is obtained through a
	convex optimization problem, \erf{eqn:abs:R}.
	Consequently, differentiability of $R_{\mathfrak C}$ depends on the
	regularity of the optimal solution of this optimization problem. In regions
	where the optimal mixing parameter and optimal classical state vary
	smoothly with $\rho$, the robustness admits a well-defined Fr\'{e}chet
	derivative.
	
	At points where the optimal decomposition changes discontinuously, or where
	the state trajectory reaches a non-smooth portion of the boundary of
	$\mathfrak C$, the robustness functional may fail to be differentiable.
	In such cases, the appropriate mathematical description requires generalized
	derivatives or subdifferential calculus.
	
	\section{Werner state robustness} \label{appendix:werner}
	\subsection{Symmetry Reduction} \label{app:werner:symmetry}
	
	The discussion in Sec.~\ref{sec:iso:noise} uses the known Werner thresholds to calibrate the
	robustness framework. To justify this
	calibration, it remains to show that, for Werner states, the optimal reference state appearing in
	the absolute robustness optimization is indeed the maximally mixed state. The following argument
	establishes this result. Consider the two-qubit Werner family $\rho_{\rm W}(\wp)$ given in
	\erf{eqn:2q:werner}. The Werner family is invariant under collective unitary transformations of the
	form
	\begin{equation}
		(\hat U\otimes \hat U)\rho_{\rm W}(\hat U^\dagger\otimes \hat U^\dagger) =\rho_{\rm W},
	\end{equation}
	for all $\hat U\in \mathrm U(2)$. Associated with this symmetry is the twirling
	map~\cite{VolWer01}:
	\begin{equation}
		\mathcal T(\rho) = \int dU \, (\hat U\otimes \hat U) \,\rho \, (\hat U^\dagger\otimes \hat
		U^\dagger),
	\end{equation}
	where the integral is taken with respect to the Haar measure on $\mathrm U(2)$.
	The map $\mathcal T$ is completely positive and trace preserving, and satisfies
	\begin{equation}
		\mathcal T(\rho_{\rm W})=\rho_{\rm W},
		\qquad
		\mathcal T(\varrho_{\rm c})=\varrho_{\rm c}.
	\end{equation}
	Moreover, since $\mathcal T$ is a convex mixture of local unitary operations, it preserves the
	convex sets
	${\mathfrak C_{\mathrm m}}$. Suppose that for a given classical set $\mathfrak C_{\mathrm m}$ there
	exists a reference state $\sigma\in\mathfrak C_{\mathrm m}$ and parameter $\alpha\ge0$ such that
	\begin{equation}
		\frac{\rho_{\rm W}+\alpha\sigma}{1+\alpha} \in \mathfrak C_{\mathrm m}.
	\end{equation}
	Applying the twirling map and using linearity yields
	\begin{equation}
		\frac{\rho_{\rm W}+\alpha\,\mathcal T(\sigma)}{1+\alpha} \in \mathfrak C_{\mathrm m},
	\end{equation}
	where $\mathcal T(\sigma)\in\mathfrak C_{\mathrm m}$ by invariance of the classical set under
	twirling.
	Therefore every feasible point in the original optimization has a Werner-symmetric counterpart with
	the same admissible parameter $\alpha$. Since twirling cannot increase the minimum feasible value,
	the optimization defining $R_{\mathfrak C}$ may be restricted, without loss of generality, to
	Werner-symmetric reference states.
	Any two-qubit state invariant under collective twirling belongs to the Werner family.
	Hence the optimal reference state may be written as, $\sigma=\rho_{\rm W}(\wp)$.
	Consider now the convex mixture appearing in the robustness optimization:
	\begin{equation}
		\frac{\rho_{\rm W}(0)+\alpha\rho_{\rm W}(\wp)}{1+\alpha}.
	\end{equation}
	Using the linear parameterization of Werner states gives
	\begin{equation}
		\frac{\rho_{\rm W}(0)+\alpha \,\rho_{\rm W}(\wp)}{1+\alpha}=\rho_{\rm W}\left(\frac{\alpha
			\,\wp}{1+\alpha}\right).
	\end{equation}
	Hence, mixing two Werner states remains on the same isotropic line. Let $\wp_{\mathrm m}$ denote
	the
	critical Werner parameter at which the trajectory first enters the classical set $\mathfrak
	C_{\mathrm m}$.
	Then feasibility requires ${\alpha\,\wp}/({1+\alpha}) \ge \wp_{\mathrm m}$,
	which implies $\alpha \ge {\wp_{\mathrm m}}/({\wp-\wp_{\mathrm m}})$.
	For fixed threshold $\wp_{\mathrm m}$, the right-hand side is monotonically decreasing in $\wp$,
	and
	is minimized by the maximal admissible value $\wp=1$, corresponding precisely to the maximally
	mixed
	state
	\begin{equation}
		\sigma=\varrho_{\rm c}.
	\end{equation}
	Therefore, within the Werner-symmetric sector, the maximally mixed state constitutes an optimal
	reference state for the robustness optimization.
	It follows that, for the Werner family, the absolute robustness with respect to the classical sets
	$\mathfrak C_{\mathrm m}$ is realized by isotropic mixing with white noise. Consequently, in this
	context, the robustness optimization is exactly equivalent to determining the depolarizing
	threshold
	along the isotropic mixing direction. The robustness ordering therefore coincides with the
	experimentally accessible ordering of the depolarizing transition points.
	
	\subsection{Werner states under depolarizing noise}  \label{app:werner:depol}
	
	In this appendix we derive explicitly the evolution of two-qubit Werner states under a depolarizing
	channel and compute the corresponding transition times, \erf{eqn:w:trantime}, for the loss of Bell
	nonlocality, EPR steering, and entanglement within the robustness framework. We consider the family
	of Werner states $\rho_{\rm W}(\wp)$ given in \erf{eqn:2q:werner}. The noisy evolution is generated
	by a depolarizing channel acting along a fixed direction, $\mathcal E_t(\rho) =
	\left[1-\Lambda(t)\right]\rho + \Lambda(t)\varrho_c$,
	where $\Lambda(0)=0$ and $\Lambda(t)\to 1$ as $t\to\infty$. A standard choice is
	$\Lambda(t)=1-e^{-\gamma t}$ where $\gamma \in \mathbb R^+$ sets the timescale over which the
	depolarizing noise builds up. However, the derivation holds for any monotonic parameterization
	$\Lambda(t)$; the exponential form is introduced only to obtain the closed-form expression for the
	transition time. Applying the channel to $\rho_{\mathrm{W}}(\wp)$ yields the following state that
	still retains the Werner form:
	\begin{equation}
		\rho(t) = \left[1-\wp_{\mathrm{eff}}(t)\right]\,\rho_- + \wp_{\mathrm{eff}}(t)
		\varrho_{\mathrm{w}},
	\end{equation}
	with a time-dependent parameter $\wp_{\mathrm{eff}}(t) \equiv \Lambda(t) + 
	\left[1-\Lambda(t)\right]\wp$.
	Thus, the depolarizing channel acts by a simple rescaling of the Werner parameter, and the
	evolution
	remains within the Werner family.

	\subsection{Robustness decay rate along the Werner family} \label{app:werner:decay}
	
	The robustness decay rate introduced in the main text requires the
	functional dependence of the robustness measure along the one-dimensional
	Werner-state family. The symmetry reduction discussed above allows the absolute robustness
	optimization to be restricted to isotropic mixtures with the maximally
	mixed state. Consider an initial Werner state parameterized by
	$\wp_{\mathrm{eff}}$,
	\begin{equation}
		\rho_{\mathrm W}(\wp_{\mathrm{eff}}) = (1-\wp_{\mathrm{eff}}) \rho_- +
		\wp_{\mathrm{eff}}\varrho_{\mathrm c}.
	\end{equation}
	Let $\wp_{\mathfrak C}$ denote the boundary value at which the Werner
	family enters the classical set $\mathfrak C$. The robustness is obtained
	by finding the minimum amount of isotropic noise required to move the
	state to this boundary. Hence the optimal mixture satisfies
	\begin{equation}
		\frac{\rho_{\mathrm W}(\wp_{\mathrm{eff}})+\alpha \rho_{\mathrm W}(1)} {1+\alpha}	= \rho_{\mathrm
			W}(\wp_{\mathfrak C}).
	\end{equation}
	Replacing the Werner states in this expression and rearranging for $\wp_{\mathfrak C}$ leads to
	$({\wp_{\mathrm{eff}}+\alpha})/({1+\alpha})=\wp_{\mathfrak C}$. Solving for the optimal mixing
	parameter yields
	\begin{equation}
		R_{\mathfrak C}\left(\rho_{\mathrm W}(\wp_{\mathrm{eff}})\right)=\frac{\wp_{\mathfrak
				C}-\wp_{\mathrm{eff}}}{1-\wp_{\mathfrak C}},
	\end{equation}
	for $0\leq \wp_{\mathrm{eff}} \leq \wp_{\mathfrak C}$. Therefore the robustness varies linearly
	along the Werner direction. Substituting its derivative into the definition of the robustness decay
	rate, and assuming the exponential depolarizing model gives \erf{eqn:werner:decay:exp}.
	This result shows explicitly that the robustness decay consists of two
	factors: the channel-induced decay of the Werner parameter and the
	geometric sensitivity determined by the location of the classical
	boundary. 
	
	\section{Thresholds for Amplitude-Damped States}
	\label{app:asym:thresholds}
	
	In this appendix we explicitly determine the thresholds for Bell nonlocality and entanglement for
	the amplitude-damped singlet state considered in Sec.~\ref{sec:amp:damp}. The dynamics is generated
	by a local amplitude damping channel acting on Alice’s subsystem, $\rho(t) = (\mathcal{E}_A^{(t)}
	\otimes \mathcal{I}_B)(\rho)$,
	with Kraus operators given in \erfa{eqn:k0:amp:damp}{eqn:k1:amp:damp}, and damping parameter
	$\eta(t) = 1 - e^{-\gamma_{\rm a} t}$.
	Taking as initial state the singlet $\rho = |\psi_-\rangle\langle\psi_-|$, where $|\psi_-\rangle =
	(|01\rangle - |10\rangle)/\sqrt{2}$, in the basis $\{\ket{00},\ket{01},\ket{10},\ket{11}\}$ the
	evolved state can be written as
	\begin{equation}
		\rho(\eta) = \frac{1}{2}
		\begin{pmatrix}
			\eta & 0 & 0 & 0 \\
			0 & 1 & -\sqrt{1-\eta} & 0 \\
			0 &-\sqrt{1-\eta} & 1-\eta & 0 \\
			0 & 0 & 0 & 0
		\end{pmatrix}.
	\end{equation}
	This state is no longer invariant under exchange of the two subsystems, reflecting the asymmetric
	action of the noise.
	To determine the entanglement threshold, we employ the positive partial transpose (PPT) criterion.
	Taking the partial transpose with respect to Bob, the matrix $\rho^{\top_B}(\eta)$ becomes block
	diagonal. The only potentially negative contribution arises from the subspace associated with
	$\{|00\rangle, |11\rangle\}$. In this sector, the relevant eigenvalues are $\lambda_{\pm} =
	\tfrac{1}{2}\left[\eta \pm \sqrt{\eta^2 + 4(1-\eta)}\right]$.
	The smaller eigenvalue $\lambda_-$ is strictly negative for all $\eta < 1$ and vanishes only at
	$\eta = 1$. Therefore, the PPT criterion is violated for all finite damping strengths, and the
	state
	remains entangled for all $\eta < 1$. The entanglement threshold is thus $\eta_{\mathrm{ent}} = 1$.
	In terms of the time parameterization $\eta(t)$, this implies that entanglement persists for all
	finite times and is lost only asymptotically as $t \to \infty$.
	
	We next determine the threshold for Bell nonlocality using the Horodecki criterion
	\cite{Horodecki95}. For a two-qubit state $\rho$, define the correlation matrix $T \coloneq \langle
	\hat{\bm{\sigma}} \otimes \hat{\bm{\sigma}} \rangle$, where $\hat{\bm{\sigma}} \equiv (\hat
	\sigma_1, \hat \sigma_2, \hat \sigma_3)^\top$ is the Pauli operator basis, and $\langle \hat o
	\rangle = \mathrm{Tr}(\rho \hat o)$. The Clauser–Horne–Shimony–Holt (CHSH) inequality is violated
	if
	and only if
	\begin{equation}
		H(T) = s_1(T)^2 + s_2(T)^2 > 1,
	\end{equation}
	where $s_j(T)$ are the two largest singular values of $T$ (equivalently, the square roots of the
	eigenvalues of $T^\top T$).
	For the amplitude-damped singlet, the correlation matrix remains diagonal and takes the form
	$T(\eta) = \mathrm{diag}(-\sqrt{1-\eta},-\sqrt{1-\eta},\eta-1)$. It follows directly that
	$H(\eta)=2(1-\eta)$.
	The CHSH inequality is violated whenever $H(\eta) > 1$, yielding the Bell nonlocality threshold
	$\eta_{\mathrm{Bell}} = {1}/{2}$.
	In terms of the dynamical parameter, the corresponding transition time is $t_{\mathrm{Bell}} =
	{\gamma_{\rm a}^{-1}} \ln 2$.

	\bibliography{references}
	
\end{document}

%% file: author_list.tex
\author{Shakib Daryanoosh} 
\email{shakib.daryanoosh@curtin.edu.au}
\affiliation{Curtin Centre for Optimisation and Decision Science, Curtin University, Whadjuk Country, Perth 6102, Australia}

%
% visitor_addresses.tex                       26 March 2013 (a) 
%  available symbols are:
%  $\ast, \dag, \ddag, \S, \P, $\|$, $\ast\ast$, \dag\dag, \ddag\ddag ,\#
%

%{falkowski}
%$^{?}$Laboratoire de Physique Theorique, Orsay, FR,
%{hooper,kozminski}
%$^{?}$}Visitor from Lewis University, Romeoville, IL, USA.
%{weber}
%$^{?}$Universit{\"a}t Bern, Bern, Switzerland.
%{deceased}
%$^{\ddag}$Deceased.

%
\vskip 0.25cm

%% file: references.bib
@Preamble{"\providecommand{\noopsort}[1]{}" #
"\providecommand{\singleletter}[1]{#1}%"}

@article{WisDoh07,
	title = {Steering, Entanglement, Nonlocality, and the Einstein-Podolsky-Rosen Paradox},
	author = {Wiseman, H. M. and Jones, S. J. and Doherty, A. C.},
	journal = {Phys. Rev. Lett.},
	volume = {98},
	issue = {14},
	pages = {140402},
	numpages = {4},
	year = {2007},
	month = {Apr},
	publisher = {American Physical Society},
	doi = {10.1103/PhysRevLett.98.140402},
	url = {https://link.aps.org/doi/10.1103/PhysRevLett.98.140402}
}

@article{QuiBru15,
	title = {Inequivalence of entanglement, steering, and Bell nonlocality for general measurements},
	author = {Quintino, Marco T\'ulio and V\'ertesi, Tam\'as and Cavalcanti, Daniel and Augusiak, Remigiusz and Demianowicz, Maciej and Ac\'{\i}n, Antonio and Brunner, Nicolas},
	journal = {Phys. Rev. A},
	volume = {92},
	issue = {3},
	pages = {032107},
	numpages = {6},
	year = {2015},
	month = {Sep},
	publisher = {American Physical Society},
	doi = {10.1103/PhysRevA.92.032107},
	url = {https://link.aps.org/doi/10.1103/PhysRevA.92.032107}
}

@article{UolGuh14,
	title = {Joint Measurability of Generalized Measurements Implies Classicality},
	author = {Uola, Roope and Moroder, Tobias and G\"uhne, Otfried},
	journal = {Phys. Rev. Lett.},
	volume = {113},
	issue = {16},
	pages = {160403},
	numpages = {5},
	year = {2014},
	month = {Oct},
	publisher = {American Physical Society},
	doi = {10.1103/PhysRevLett.113.160403},
	url = {https://link.aps.org/doi/10.1103/PhysRevLett.113.160403}
}

@article{VidTar99,
	title = {Robustness of entanglement},
	author = {Vidal, Guifr\'e and Tarrach, Rolf},
	journal = {Phys. Rev. A},
	volume = {59},
	issue = {1},
	pages = {141--155},
	numpages = {0},
	year = {1999},
	month = {Jan},
	publisher = {American Physical Society},
	doi = {10.1103/PhysRevA.59.141},
	url = {https://link.aps.org/doi/10.1103/PhysRevA.59.141}
}

@article{HarNie03,
	title = {Robustness of quantum gates in the presence of noise},
	author = {Harrow, Aram W. and Nielsen, Michael A.},
	journal = {Phys. Rev. A},
	volume = {68},
	issue = {1},
	pages = {012308},
	numpages = {13},
	year = {2003},
	month = {Jul},
	publisher = {American Physical Society},
	doi = {10.1103/PhysRevA.68.012308},
	url = {https://link.aps.org/doi/10.1103/PhysRevA.68.012308}
}

@article{Werner89,
	title = {Quantum states with {E}instein-{P}odolsky-{R}osen correlations admitting a hidden-variable model},
	author = {Werner, Reinhard F.},
	journal = {Phys. Rev. A},
	volume = {40},
	issue = {8},
	pages = {4277--4281},
	numpages = {0},
	year = {1989},
	month = {Oct},
	publisher = {American Physical Society},
	doi = {10.1103/PhysRevA.40.4277},
	url = {https://link.aps.org/doi/10.1103/PhysRevA.40.4277}
}

@article{Horodecki96,
	title = {Separability of mixed states: necessary and sufficient conditions},
	journal = {Physics Letters A},
	volume = {223},
	number = {1},
	pages = {1-8},
	year = {1996},
	issn = {0375-9601},
	doi = {https://doi.org/10.1016/S0375-9601(96)00706-2},
	url = {https://www.sciencedirect.com/science/article/pii/S0375960196007062},
	author = {Micha{\l} Horodecki and Pawe{\l} Horodecki and Ryszard Horodecki}
}

@article{Horodecki95,
	title = {Violating Bell inequality by mixed spin-12 states: necessary and sufficient condition},
	journal = {Physics Letters A},
	volume = {200},
	number = {5},
	pages = {340-344},
	year = {1995},
	issn = {0375-9601},
	doi = {https://doi.org/10.1016/0375-9601(95)00214-N},
	url = {https://www.sciencedirect.com/science/article/pii/037596019500214N},
	author = {R. Horodecki and P. Horodecki and M. Horodecki}
}

@article{BruWeh14,
	title = {Bell nonlocality},
	author = {Brunner, Nicolas and Cavalcanti, Daniel and Pironio, Stefano and Scarani, Valerio and Wehner, Stephanie},
	journal = {Rev. Mod. Phys.},
	volume = {86},
	issue = {2},
	pages = {419--478},
	numpages = {60},
	year = {2014},
	month = {Apr},
	publisher = {American Physical Society},
	doi = {10.1103/RevModPhys.86.419},
	url = {https://link.aps.org/doi/10.1103/RevModPhys.86.419}
}

@article{UolGuh20,
	title = {Quantum steering},
	author = {Uola, Roope and Costa, Ana C. S. and Nguyen, H. Chau and G\"uhne, Otfried},
	journal = {Rev. Mod. Phys.},
	volume = {92},
	issue = {1},
	pages = {015001},
	numpages = {40},
	year = {2020},
	month = {Mar},
	publisher = {American Physical Society},
	doi = {10.1103/RevModPhys.92.015001},
	url = {https://link.aps.org/doi/10.1103/RevModPhys.92.015001}
}

@article{CavSkr16,
	title = {Quantitative relations between measurement incompatibility, quantum steering, and nonlocality},
	author = {Cavalcanti, D. and Skrzypczyk, P.},
	journal = {Phys. Rev. A},
	volume = {93},
	issue = {5},
	pages = {052112},
	numpages = {12},
	year = {2016},
	month = {May},
	publisher = {American Physical Society},
	doi = {10.1103/PhysRevA.93.052112},
	url = {https://link.aps.org/doi/10.1103/PhysRevA.93.052112}
}

@article{SkrCav14,
	title = {Quantifying {E}instein-{P}odolsky-{R}osen Steering},
	author = {Skrzypczyk, Paul and Navascu\'es, Miguel and Cavalcanti, Daniel},
	journal = {Phys. Rev. Lett.},
	volume = {112},
	issue = {18},
	pages = {180404},
	numpages = {5},
	year = {2014},
	month = {May},
	publisher = {American Physical Society},
	doi = {10.1103/PhysRevLett.112.180404},
	url = {https://link.aps.org/doi/10.1103/PhysRevLett.112.180404}
}

@article{PiaWat15,
	title = {Necessary and Sufficient Quantum Information Characterization of {E}instein-{P}odolsky-{R}osen Steering},
	author = {Piani, Marco and Watrous, John},
	journal = {Phys. Rev. Lett.},
	volume = {114},
	issue = {6},
	pages = {060404},
	numpages = {6},
	year = {2015},
	month = {Feb},
	publisher = {American Physical Society},
	doi = {10.1103/PhysRevLett.114.060404},
	url = {https://link.aps.org/doi/10.1103/PhysRevLett.114.060404}
}

@article{HorHor09,
	title = {Quantum entanglement},
	author = {Horodecki, Ryszard and Horodecki, Pawe\l{} and Horodecki, Micha\l{} and Horodecki, Karol},
	journal = {Rev. Mod. Phys.},
	volume = {81},
	issue = {2},
	pages = {865--942},
	numpages = {0},
	year = {2009},
	month = {Jun},
	publisher = {American Physical Society},
	doi = {10.1103/RevModPhys.81.865},
	url = {https://link.aps.org/doi/10.1103/RevModPhys.81.865}
}

@article{PirWal20,
	author = {S. Pirandola and U. L. Andersen and L. Banchi and M. Berta and D. Bunandar and R. Colbeck and D. Englund and T. Gehring and C. Lupo and C. Ottaviani and J. L. Pereira and M. Razavi and J. Shamsul Shaari and M. Tomamichel and V. C. Usenko and G. Vallone and P. Villoresi and P. Wallden},
	journal = {Adv. Opt. Photon.},
	number = {4},
	pages = {1012--1236},
	publisher = {Optica Publishing Group},
	title = {Advances in quantum cryptography},
	volume = {12},
	month = {Dec},
	year = {2020},
	url = {https://opg.optica.org/aop/abstract.cfm?URI=aop-12-4-1012},
	doi = {10.1364/AOP.361502}
}

@article{MonBay25,
	title = {Review: Quantum metrology and sensing with many-body systems},
	journal = {Physics Reports},
	volume = {1134},
	pages = {1-62},
	year = {2025},
	issn = {0370-1573},
	doi = {https://doi.org/10.1016/j.physrep.2025.05.005},
	url = {https://www.sciencedirect.com/science/article/pii/S0370157325001565},
	author = {Victor Montenegro and Chiranjib Mukhopadhyay and Rozhin Yousefjani and Saubhik Sarkar and Utkarsh Mishra and Matteo G.A. Paris and Abolfazl Bayat}
}

@article{BhaAsp22,
	title = {Noisy intermediate-scale quantum algorithms},
	author = {Bharti, Kishor and Cervera-Lierta, Alba and Kyaw, Thi Ha and Haug, Tobias and Alperin-Lea, Sumner and Anand, Abhinav and Degroote, Matthias and Heimonen, Hermanni and Kottmann, Jakob S. and Menke, Tim and Mok, Wai-Keong and Sim, Sukin and Kwek, Leong-Chuan and Aspuru-Guzik, Al\'an},
	journal = {Rev. Mod. Phys.},
	volume = {94},
	issue = {1},
	pages = {015004},
	numpages = {69},
	year = {2022},
	month = {Feb},
	publisher = {American Physical Society},
	doi = {10.1103/RevModPhys.94.015004},
	url = {https://link.aps.org/doi/10.1103/RevModPhys.94.015004}
}

@article{Zur03,
	title = {Decoherence, einselection, and the quantum origins of the classical},
	author = {Zurek, Wojciech Hubert},
	journal = {Rev. Mod. Phys.},
	volume = {75},
	issue = {3},
	pages = {715--775},
	numpages = {0},
	year = {2003},
	month = {May},
	publisher = {American Physical Society},
	doi = {10.1103/RevModPhys.75.715},
	url = {https://link.aps.org/doi/10.1103/RevModPhys.75.715}
}

@article{TinEbe09,
	author = {Ting Yu  and J. H. Eberly },
	title = {Sudden Death of Entanglement},
	journal = {Science},
	volume = {323},
	number = {5914},
	pages = {598-601},
	year = {2009},
	doi = {10.1126/science.1167343},
	URL = {https://www.science.org/doi/abs/10.1126/science.1167343}
}

@article{DarSlu24,
	title = {Experimental investigation of a multiphoton Heisenberg-limited interferometric scheme: The effect of imperfections},
	author = {Daryanoosh, Shakib and Pryde, Geoff J. and Wiseman, Howard M. and Slussarenko, Sergei},
	journal = {Phys. Rev. A},
	volume = {110},
	issue = {1},
	pages = {012614},
	numpages = {16},
	year = {2024},
	month = {Jul},
	publisher = {American Physical Society},
	doi = {10.1103/PhysRevA.110.012614},
	url = {https://link.aps.org/doi/10.1103/PhysRevA.110.012614}
}

@article{AciWil18,
	doi = {10.1088/1367-2630/aad1ea},
	url = {https://doi.org/10.1088/1367-2630/aad1ea},
	year = {2018},
	month = {aug},
	publisher = {IOP Publishing},
	volume = {20},
	number = {8},
	pages = {080201},
	author = {Acín, Antonio and Bloch, Immanuel and Buhrman, Harry and Calarco, Tommaso and Eichler, Christopher and Eisert, Jens and Esteve, Daniel and Gisin, Nicolas and Glaser, Steffen J and Jelezko, Fedor and Kuhr, Stefan and Lewenstein, Maciej and Riedel, Max F and Schmidt, Piet O and Thew, Rob and Wallraff, Andreas and Walmsley, Ian and Wilhelm, Frank K},
	title = {The quantum technologies roadmap: a European community view},
	journal = {New Journal of Physics}
}

@article{StrHer11,
	title = {Behavior of Quantum Correlations under Local Noise},
	author = {Streltsov, Alexander and Kampermann, Hermann and Bru\ss{}, Dagmar},
	journal = {Phys. Rev. Lett.},
	volume = {107},
	issue = {17},
	pages = {170502},
	numpages = {5},
	year = {2011},
	month = {Oct},
	publisher = {American Physical Society},
	doi = {10.1103/PhysRevLett.107.170502},
	url = {https://link.aps.org/doi/10.1103/PhysRevLett.107.170502}
}

@article{IkeLow24,
	title = {Robustness of quantum correlation in quantum energy teleportation},
	author = {Ikeda, Kazuki and Lowe, Adam},
	journal = {Phys. Rev. D},
	volume = {110},
	issue = {9},
	pages = {096010},
	numpages = {11},
	year = {2024},
	month = {Nov},
	publisher = {American Physical Society},
	doi = {10.1103/PhysRevD.110.096010},
	url = {https://link.aps.org/doi/10.1103/PhysRevD.110.096010}
}

@article{Bell64,
	title = {On the Einstein Podolsky Rosen paradox},
	author = {Bell, J. S.},
	journal = {Physics Physique Fizika},
	volume = {1},
	issue = {3},
	pages = {195--200},
	numpages = {6},
	year = {1964},
	month = {Nov},
	publisher = {American Physical Society},
	doi = {10.1103/PhysicsPhysiqueFizika.1.195},
	url = {https://link.aps.org/doi/10.1103/PhysicsPhysiqueFizika.1.195}
}

@Article{HenHan15,
	author={Hensen, B.
	and Bernien, H.
	and Dr{\'e}au, A. E.
	and Reiserer, A.
	and Kalb, N.
	and Blok, M. S.
	and Ruitenberg, J.
	and Vermeulen, R. F. L.
	and Schouten, R. N.
	and Abell{\'a}n, C.
	and Amaya, W.
	and Pruneri, V.
	and Mitchell, M. W.
	and Markham, M.
	and Twitchen, D. J.
	and Elkouss, D.
	and Wehner, S.
	and Taminiau, T. H.
	and Hanson, R.},
	title={Loophole-free Bell inequality violation using electron spins separated by 1.3 kilometres},
	journal={Nature},
	year={2015},
	month={Oct},
	day={01},
	volume={526},
	number={7575},
	pages={682-686},
	issn={1476-4687},
	doi={10.1038/nature15759},
	url={https://doi.org/10.1038/nature15759}
}

@article{GiuZei15,
	title = {Significant-Loophole-Free Test of Bell's Theorem with Entangled Photons},
	author = {Giustina, Marissa and Versteegh, Marijn A. M. and Wengerowsky, S\"oren and Handsteiner, Johannes and Hochrainer, Armin and Phelan, Kevin and Steinlechner, Fabian and Kofler, Johannes and Larsson, Jan-\AA{}ke and Abell\'an, Carlos and Amaya, Waldimar and Pruneri, Valerio and Mitchell, Morgan W. and Beyer, J\"orn and Gerrits, Thomas and Lita, Adriana E. and Shalm, Lynden K. and Nam, Sae Woo and Scheidl, Thomas and Ursin, Rupert and Wittmann, Bernhard and Zeilinger, Anton},
	journal = {Phys. Rev. Lett.},
	volume = {115},
	issue = {25},
	pages = {250401},
	numpages = {7},
	year = {2015},
	month = {Dec},
	publisher = {American Physical Society},
	doi = {10.1103/PhysRevLett.115.250401},
	url = {https://link.aps.org/doi/10.1103/PhysRevLett.115.250401}
}

@article{ShaNam15,
	title = {Strong Loophole-Free Test of Local Realism},
	author = {Shalm, Lynden K. and Meyer-Scott, Evan and Christensen, Bradley G. and Bierhorst, Peter and Wayne, Michael A. and Stevens, Martin J. and Gerrits, Thomas and Glancy, Scott and Hamel, Deny R. and Allman, Michael S. and Coakley, Kevin J. and Dyer, Shellee D. and Hodge, Carson and Lita, Adriana E. and Verma, Varun B. and Lambrocco, Camilla and Tortorici, Edward and Migdall, Alan L. and Zhang, Yanbao and Kumor, Daniel R. and Farr, William H. and Marsili, Francesco and Shaw, Matthew D. and Stern, Jeffrey A. and Abell\'an, Carlos and Amaya, Waldimar and Pruneri, Valerio and Jennewein, Thomas and Mitchell, Morgan W. and Kwiat, Paul G. and Bienfang, Joshua C. and Mirin, Richard P. and Knill, Emanuel and Nam, Sae Woo},
	journal = {Phys. Rev. Lett.},
	volume = {115},
	issue = {25},
	pages = {250402},
	numpages = {10},
	year = {2015},
	month = {Dec},
	publisher = {American Physical Society},
	doi = {10.1103/PhysRevLett.115.250402},
	url = {https://link.aps.org/doi/10.1103/PhysRevLett.115.250402}
}

@article{Steiner03,
	title = {Generalized robustness of entanglement},
	author = {Steiner, Michael},
	journal = {Phys. Rev. A},
	volume = {67},
	issue = {5},
	pages = {054305},
	numpages = {4},
	year = {2003},
	month = {May},
	publisher = {American Physical Society},
	doi = {10.1103/PhysRevA.67.054305},
	url = {https://link.aps.org/doi/10.1103/PhysRevA.67.054305}
}

@article{TakReg19,
	title = {General Resource Theories in Quantum Mechanics and Beyond: Operational Characterization via Discrimination Tasks},
	author = {Takagi, Ryuji and Regula, Bartosz},
	journal = {Phys. Rev. X},
	volume = {9},
	issue = {3},
	pages = {031053},
	numpages = {28},
	year = {2019},
	month = {Sep},
	publisher = {American Physical Society},
	doi = {10.1103/PhysRevX.9.031053},
	url = {https://link.aps.org/doi/10.1103/PhysRevX.9.031053}
}

@article{HorRus03,
	author = {Horodecki, Michael and Shor, Peter W. and Ruskai, Mary Beth},
	title = {Entanglement Breaking Channels},
	journal = {Reviews in Mathematical Physics},
	volume = {15},
	number = {06},
	pages = {629-641},
	year = {2003},
	doi = {10.1142/S0129055X03001709},
	URL = {https://doi.org/10.1142/S0129055X03001709}
}

@article{GouSca21,
	title = {Entanglement of a bipartite channel},
	author = {Gour, Gilad and Scandolo, Carlo Maria},
	journal = {Phys. Rev. A},
	volume = {103},
	issue = {6},
	pages = {062422},
	numpages = {21},
	year = {2021},
	month = {Jun},
	publisher = {American Physical Society},
	doi = {10.1103/PhysRevA.103.062422},
	url = {https://link.aps.org/doi/10.1103/PhysRevA.103.062422}
}

@book{NieChu10, 
	place={Cambridge}, 
	title={Quantum Computation and Quantum Information: 10th Anniversary Edition}, 
	publisher={Cambridge University Press}, 
	author={Nielsen, Michael A. and Chuang, Isaac L.}, 
	year={2010}
}

@book{GiaHil04,
	author    = {Mariano Giaquinta and Stefan Hildebrandt},
	title     = {Calculus of Variations I},
	series    = {Grundlehren der mathematischen Wissenschaften},
	volume    = {310},
	publisher = {Springer},
	address   = {Berlin, Heidelberg},
	year      = {2004},
	doi       = {10.1007/978-3-662-03278-7},
	isbn      = {978-3-662-03276-3},
	isbn      = {978-3-662-03278-7}
}

@book{Ste08,
	author    = {Steven Roman},
	title     = {Advanced Linear Algebra},
	edition   = {3},
	series    = {Graduate Texts in Mathematics},
	volume    = {135},
	publisher = {Springer},
	address   = {New York, NY},
	year      = {2008},
	isbn      = {978-0-387-72828-5},
	doi       = {10.1007/978-0-387-72831-5},
	url       = {https://doi.org/10.1007/978-0-387-72831-5}
}

@article{GKS76,
	author = {Gorini, Vittorio and Kossakowski, Andrzej and Sudarshan, E. C. G.},
	title = {Completely positive dynamical semigroups of N‐level systems},
	journal = {Journal of Mathematical Physics},
	volume = {17},
	number = {5},
	pages = {821-825},
	year = {1976},
	month = {05},
	issn = {0022-2488},
	doi = {10.1063/1.522979},
	url = {https://doi.org/10.1063/1.522979}
}

@Article{Lindblad76,
	author={Lindblad, G.},
	title={On the generators of quantum dynamical semigroups},
	journal={Communications in Mathematical Physics},
	year={1976},
	month={Jun},
	day={01},
	volume={48},
	number={2},
	pages={119-130},
	issn={1432-0916},
	doi={10.1007/BF01608499},
	url={https://doi.org/10.1007/BF01608499}
}

@article{KuNor18,
	title = {Hierarchy in temporal quantum correlations},
	author = {Ku, Huan-Yu and Chen, Shin-Liang and Lambert, Neill and Chen, Yueh-Nan and Nori, Franco},
	journal = {Phys. Rev. A},
	volume = {98},
	issue = {2},
	pages = {022104},
	numpages = {9},
	year = {2018},
	month = {Aug},
	publisher = {American Physical Society},
	doi = {10.1103/PhysRevA.98.022104},
	url = {https://link.aps.org/doi/10.1103/PhysRevA.98.022104}
}

@article{VolWer01,
	title = {Entanglement measures under symmetry},
	author = {Vollbrecht, K. G. H. and Werner, R. F.},
	journal = {Phys. Rev. A},
	volume = {64},
	issue = {6},
	pages = {062307},
	numpages = {15},
	year = {2001},
	month = {Nov},
	publisher = {American Physical Society},
	doi = {10.1103/PhysRevA.64.062307},
	url = {https://link.aps.org/doi/10.1103/PhysRevA.64.062307}
}
